\newcommand{\be}{\begin{equation}}
\newcommand{\ee}{\end{equation}}
\newcommand{\bea}{\begin{eqnarray}}
\newcommand{\eea}{\end{eqnarray}}
\newcommand{\distas}[1]{\mathbin{\overset{#1}{\kern\z@\sim}}}%
\newsavebox{\mybox}\newsavebox{\mysim}
\newcommand{\distras}[1]{%
  \savebox{\mybox}{\hbox{\kern3pt$\scriptstyle#1$\kern3pt}}%
  \savebox{\mysim}{\hbox{$\sim$}}%
  \mathbin{\overset{#1}{\kern\z@\resizebox{\wd\mybox}{\ht\mysim}{$\sim$}}}%
}
\newcommand{\x}{\mathbf{x}}
\newtheorem{theorem}{Theorem}
\newtheorem{lemma}{Lemma}
\newtheorem{remark}{Remark}
\newtheorem{corollary}{Corollary}
\newtheorem{definition}{Definition}
\definecolor{LightCyan}{rgb}{0.88,1,1}
\definecolor{shadecolor}{rgb}{0.01,0.4,.8}
\begin{document}
\begin{center} { \bf \large \sc 
\bf \sc Predicting the scoring time in hockey}
\vskip 0.1in { Abdolnasser Sadeghkhani, Syed Ejaz Ahmed }
\\
\vskip 0.1in
Department of Mathematics, Brock University, St. Catharines, ON, Canada
\\ E-mail: asadeghkhani@brocku.ca, sahmed5@brocku.ca

\end{center}
\vspace*{0.5cm}

\normalsize
\begin{abstract}
In this paper, we propose a Bayesian predictive density estimator to predict the time until the $r$--th goal is scored in a hockey game, using ancillary information such as their performances in the past, points and specialists' opinions. To be more specific, we consider a gamma distribution as a waiting scoring model. The proposed density estimator belongs to an interesting new version of weighted beta prime distribution and outperforms the other estimator in the literature. The efficiency of our estimator is evaluated using frequentist risk along with measuring the prediction error from the old dataset, $2016-17$, to the current season ($2018-19$) of the National Hockey League.
\end{abstract}


\noindent {\it Keywords and phrases}: Ancillary information, Hockey, Predictive density estimation, Weighted beta prime distribution
\section{Introduction}
Predicting an unobserved random variable by finding the future density is often more meaningful for applications than conventional point and interval estimation of parameters, because it gives a richer prediction and all point or interval estimation, percentile, and even hypothesis testing can be obtained from the estimated density. Although in practice there is a plug--in approach in density estimation problems, Bayesian predictive distributions are a better alternative in the literature in many cases (see e.g. Marchand and Sadeghkhani 2018) and most of the time are easily calculated by using modern Monte Carlo techniques. 

Often, there exists some ancillary information at our disposal which has been unduly ignored. Perhaps the most straightforward way to visualize this additional information is to place restrictions on the parameters of our model and translate them into parametric restrictions. In this paper, we focus on the gamma distribution as a waiting scoring model. Previous studies indicate estimating of future density based on current observed gamma population. For example, L'moudden et. al. (2017) studied Bayesian estimation of a future random variable of gamma distribution when the scale parameter is bound within an interval.
This paper, uses ancillary information obtained from two independent gamma densities, to make a more accurate density estimation of one of them.

The remainder of the paper is organized as follows: In Section 2, we provide definitions and preliminary remarks which will be needed throughout this article. Section 3 discusses how to find a Bayesian predictive density estimator in gamma distribution with and without contemplating the ancillary information. In Section 4 we study an interesting application of proposed methods in estimating the future density of waiting time until scoring $r$-th goal in a hockey game. Finally, we make some concluding remarks in Section 5.

\section{Main results}\label{mainresult}
Let $X \mid \lambda \sim \operatorname{Gam}(r, \lambda)$, be a random variables (rv) from a gamma distribution with probability distribution function (pdf) 
\begin{align*}
p_{\lambda}(x)&=\frac{x^{r-1} e^{-x/\lambda}}{\Gamma(r)\,\lambda^{r}}\,,\,\,\, x>0\,,
\end{align*}
where $r>0$ is known and $\lambda >0$ unknown. 
Suppose that we are interested in estimating the future density of unobserved rv $Y$ with pdf $q_{\lambda}(y)=\frac{y^{r'-1} e^{-y/\lambda}}{\Gamma(r')\,\lambda^{r'}}\,,\,\,\, y>0$. Imagine that such an estimator can be obtained by $\hat{q}_1(\cdot \,; x)$. We use the Kullback--Leibler (KL) loss (equation \ref{KL}) in order to compare the efficiency of the proposed estimator with the actual $q(\cdot)$. 
\begin{align}
L_{KL}(q_{\lambda}, \hat{q}_1(\cdot))=\int_{\mathbb{R}} q_{\lambda}(y)\, \log \frac{q_{\lambda}(y)}{\hat{q}_1(y; x)}\, dy,\label{KL}
\end{align}
and the corresponding frequentist risk function is given as
\begin{equation}
\label{frequentistrisk}
R_{KL}(\lambda, \hat{q}_1) \,=\,  \int_{\mathbb{R}}  L_{KL}\left(q_{\lambda}(y), \hat{q}_1(\cdot)\right) \,p_{\lambda}(x) \, dx  \,.  
\end{equation} 

Previous studies (see, Corcuera and Giummol\`e 1999), indicates that under KL, the Bayes predictive density estimator for $Y$ based on observed $x$, prior and posterior $\pi(\cdot)$ and $\pi(\cdot \mid \x)$ respectively, is given as
\begin{equation}
\label{formula-mre}
\hat{q}_{\pi_{A}}(y; x) =  \int_{\Lambda}  q_{\lambda}(y)\,\pi(\lambda \mid x) \, d\lambda\,.
\end{equation}
The following contains some definitions and remarks that will be needed later on.
\begin{definition}\label{defs}
The pdf and the cumulative density function (cdf) of a rv $T$ follows the inverse--gamma distribution,  if we have 
    \begin{align}
        f_{a, b}(t)&=\frac{b^a}{\Gamma(a)}t^{-a-1}e^{-\frac{b}{t}}\,,\label{pdf1}\\
        F_{a,b}(t)&=\frac{\Gamma(a, \frac{b}{t})}{\Gamma(a)}\,, \label{cdf1}
\end{align}
where $\Gamma(m,n)$ is known as an upper incomplete gamma function and is defined as $\int_n^{\infty} t^{m_1} e^{-t}\,dt$.
\end{definition}
\begin{remark}
One can verify the following statements:
\begin{enumerate}
\item The marginal distribution of a random variable $\operatorname{IG}(s_1, s_2)$ associated with prior $\pi(\xi)$ is 
\begin{equation}\label{m2}
m(s_1, s_2)=\int_0^{\infty} \pi(\xi) \operatorname{IG}_{s_1, s_2}(\xi)\,d\xi\,.
\end{equation}
Note that if $\pi(\xi)=1/\xi$ for $\xi>0$, then 
\begin{equation}\label{m2flat}
m(s_1, s_2)=\frac{s_1}{s_2}.    
\end{equation}
\item \begin{equation}\label{m3}
m(s_1, s_2; \bar{\xi})=\int_0^{\bar{\xi}} \pi(\xi) \operatorname{IG}_{s_1, s_2}(\xi)\,d\xi\,,
\end{equation}
when $\pi(\xi)=1/\xi$ for $0\leq \xi \leq\bar{\xi}$, we have 
\begin{equation}\label{m3flat}
m(s_1, s_2; \bar{\xi})= \frac{\Gamma(s_1+1, \frac{s_2}{\bar{\xi}})}{s_2\,\Gamma(s_1)}\,. 
\end{equation}
Noting $\bar{\xi}\to \infty$,  $\Gamma(1+s_1, 0)=\Gamma(1+s_1)$ and hence (\ref{m3}) and (\ref{m3flat}) are equal.
\item \begin{equation}\label{c4}
C(k_1, k_2, s_1, s_2)=\int_0^{\infty} \pi(\xi_1)m(s_1, s_2; \xi_1)\operatorname{IG}_{k_1, k_2}(\xi_1)\,d\xi_1\,.
\end{equation}
In the case of $\pi(\xi)=\frac{1}{\xi_1\xi_2}$, for $\xi_1\geq \xi_2>0$, after some calculation we get
\begin{equation}
\frac{k_1 \, k_2^{k_1}\, s_2^{-k_1-2} \Gamma (k_1+s_1+2) \, _2\tilde{F}_1\left(k_1+1,k_1+s_1+2;k_1+2;-\frac{k_2}{s_2}\right)}{\Gamma (s_1)}\,,
\end{equation}
where $_2\tilde{F}_1$ is known as a regularized hypergeometric function.
\end{enumerate}
\end{remark}
\begin{definition}
A random variable $T$ is said to have a generalized beta prime density, whenever $T$ has the following distribution:
\begin{equation}
    \frac{1}{\operatorname{B(a,b)}}\frac{1}{\sigma}\frac{(t/\sigma)^{a \gamma -1}}{\left(1+(t/\sigma)^{\gamma}\right)^{a+b}}\,,
\end{equation}
where $\operatorname{B}(\cdot, \cdot)$ is the beta function, $\sigma>0$ is the scale parameter, and $a,b, \gamma>0$ are shape parameters. We denote it by $\operatorname{GB}'(a, b, \gamma, \sigma)$. $\operatorname{GB}'(a, b,\sigma, \gamma=1)$ is known as three parameter beta prime $\operatorname{B}'(a, b,\sigma)$. Beta prime (also known as a beta distribution of the second kind) is obtained by $\operatorname{B}'(a, b,\sigma=1)$ and denoted by $\operatorname{B}'(a, b)$
\end{definition}
\section{Bayes predictive density estimation}
The following provides the Bayes predictive density estimator under KL and prior density $\pi(\lambda)$.
\begin{lemma}\label{prime}
The posterior predictive density for $Y_1 \mid \lambda_1 \sim \operatorname{Gam}(r'_1, \lambda_1)$, based on $X_1 \mid \lambda_1  \sim \operatorname{Gam}(r_1, \lambda_1)$ and for for a prior $\pi(\lambda_1)$, is given by
\begin{align}\label{unres}
    \hat{q}_0(y_1; x_1)=\frac{1}{\operatorname{B}(r_1-1, r'_1)}\frac{m(r_1+r'_1-1, x_1+y_1)}{m(r_1-1, x_1)}\frac{x_1^{r_1-1}y_1^{r'_1-1}}{(x_1+y_1)^{r_1+r'_1-1}},\,\,\,\,\,y_1>0.
\end{align}
\end{lemma}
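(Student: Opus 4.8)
The plan is to substitute directly into the Bayes predictive formula \eqref{formula-mre}, write the posterior $\pi(\lambda_1\mid x_1)$ as likelihood-times-prior divided by the marginal, and then exploit the single structural fact that makes the whole computation collapse: a gamma pdf, viewed as a function of its \emph{scale} $\lambda_1$, is proportional to an inverse-gamma density in $\lambda_1$. This is exactly what lets every integral be rewritten in the $m(\cdot,\cdot)$ notation of \eqref{m2}.

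First I would expand
\[
\hat{q}_0(y_1;x_1)=\int_0^\infty q_{\lambda_1}(y_1)\,\pi(\lambda_1\mid x_1)\,d\lambda_1 =\frac{\displaystyle\int_0^\infty q_{\lambda_1}(y_1)\,p_{\lambda_1}(x_1)\,\pi(\lambda_1)\,d\lambda_1}{\displaystyle\int_0^\infty p_{\lambda_1}(x_1)\,\pi(\lambda_1)\,d\lambda_1}.
\]
For the denominator, the key observation is that $p_{\lambda_1}(x_1)$ has $\lambda_1$-kernel $\lambda_1^{-r_1}e^{-x_1/\lambda_1}$, so by \eqref{pdf1} one may write $p_{\lambda_1}(x_1)=\tfrac{1}{r_1-1}\operatorname{IG}_{r_1-1,\,x_1}(\lambda_1)$; integrating against $\pi$ and comparing with \eqref{m2} identifies the denominator as $\tfrac{1}{r_1-1}\,m(r_1-1,x_1)$.

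Next, for the numerator I would multiply the two gamma pdfs and collect the $\lambda_1$-dependence into $\lambda_1^{-(r_1+r'_1)}e^{-(x_1+y_1)/\lambda_1}$, which is again an inverse-gamma kernel, now of shape $r_1+r'_1-1$ and rate $x_1+y_1$. Factoring out its normalizing constant and integrating against $\pi$ turns the numerator into
\[
\frac{x_1^{r_1-1}y_1^{r'_1-1}}{\Gamma(r_1)\,\Gamma(r'_1)}\,\frac{\Gamma(r_1+r'_1-1)}{(x_1+y_1)^{r_1+r'_1-1}}\,m(r_1+r'_1-1,\,x_1+y_1),
\]
once more by \eqref{m2}.

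Finally I would form the ratio. The two $m$-factors and the algebraic term $x_1^{r_1-1}y_1^{r'_1-1}/(x_1+y_1)^{r_1+r'_1-1}$ already appear in the desired shape, so the only remaining task is to simplify the leftover gamma constant $(r_1-1)\Gamma(r_1+r'_1-1)/[\Gamma(r_1)\Gamma(r'_1)]$. Using $\Gamma(r_1)=(r_1-1)\Gamma(r_1-1)$ together with $\operatorname{B}(r_1-1,r'_1)=\Gamma(r_1-1)\Gamma(r'_1)/\Gamma(r_1+r'_1-1)$, this constant reduces to exactly $1/\operatorname{B}(r_1-1,r'_1)$, which yields \eqref{unres}. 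There is no genuine analytic obstacle; the only place demanding care is matching the inverse-gamma shape and rate parameters correctly—the $-1$ shifts in the shapes are precisely what cause the beta function to emerge—and tracking the normalizing constants so that the prior $\pi$ enters \emph{only} through the two marginals $m(r_1-1,x_1)$ and $m(r_1+r'_1-1,x_1+y_1)$.
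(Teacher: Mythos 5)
Your proposal is correct and follows essentially the same route as the paper's proof: both rest on recognizing the gamma likelihood, viewed as a function of the scale $\lambda_1$, as proportional to an inverse-gamma density (the paper's identity $\operatorname{Gam}_{r,\lambda}(x)=\tfrac{1}{r-1}\operatorname{IG}_{r-1,x}(\lambda)$), which turns the marginal into $m(r_1-1,x_1)/(r_1-1)$ and the numerator integral into a multiple of $m(r_1+r'_1-1,x_1+y_1)$, with the beta function emerging from the same gamma-constant bookkeeping. No substantive difference or gap.
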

\begin{proof}
Using the notation of $\operatorname{Gam}_{r, \lambda}(t)$ for the pdf of rv $X$, which follows the gamma distribution with parameters of $r$ and $\lambda$, and the fact that $\operatorname{Gam}_{ r, \lambda}(x)=\frac{\lambda}{x} \operatorname{IG}_{r, x}(\lambda)=\frac{1}{r-1}\operatorname{IG}_{r-1, x}(\lambda)$, for $x>0, r>0, \lambda>0$, along with equation (\ref{m2}), one may verify that $p(x_1)=\int_0^{\infty} \frac{\pi(\lambda_1)}{\Gamma(r_1)\,\lambda_1^{r_1}}x_1^{r_1-1}e^{-\frac{x_1}{\lambda_1}}\,d\lambda_1=\frac{m(r_1-1, x_1)}{r_1-1}$. The posterior density is given by
\begin{equation*}
   \pi(\lambda_1 \mid x_1) =\frac{\pi(\lambda_1)\,\lambda_1^{-r_1}\, x_1^{-r_1-1}e^{-x_1/\lambda_1}}{\Gamma(r_1-1)\, m(r_1-1, x_1)}\,.
\end{equation*}
Thus the posterior predictive density is given by
\begin{align*}
    \hat{q}_0(y_1; x_1)&=\int_0^{\infty} q_{r'_1,\lambda_1}(y_1) \,\pi(\lambda_1 \mid x_1)\,d\lambda_1\\
    &=\frac{1}{\Gamma(r'_1)\,\Gamma(r_1-1)\,m(r_1-1, x_1)}\int_0^{\infty} \pi(\lambda_1)\,\lambda_1^{-(r_1+r'_1-1)-1}x_1^{r_1-1}y_1^{r'_1-1} e^{-(x_1+y_1)/\lambda_1}\,d \lambda_1\\
    &=\frac{1}{\operatorname{B}(r_1-1, r'_1)}\frac{m(r_1+r'_1-1, x_1+y_1)}{m(r_1-1, x_1)}\frac{x_1^{r_1-1}y_1^{r'_1-1}}{(x_1+y_1)^{r_1+r'_1-1}}\,,
\end{align*}
and hence the proof.
\end{proof}
\begin{remark}
Assuming the non--informative prior $\pi(\lambda)=\frac{1}{\lambda_1}$ for $\lambda_1 >0$ (and therefore $m(s_1, s_2)=s_1/s_2$) to Lemma \ref{prime}, the posterior predictive distribution can be simplified as
\begin{equation}\label{unr}
\frac{1}{\operatorname{B}(r_1, r'_1)}\frac{x_1^{r_1}y_1^{r'_1-1}}{(x_1+y_1)^{r_1+r'_1}}\,\,\,\,\,\,y_1>0,    
\end{equation}
which is known as a three parameter beta prime distribution, namely, $Beta' (r'_1, r_1, x_1)$, as first observed by Aitchison (1975).
L'moudden and Marchand (2017) also use Bayes predictive density estimator for a gamma model under Kullback–Leibler loss, when the scale parameter is bound in an interval $(a, b)$ for $b>a>0$, and show that restricted Bayes predictive density estimator dominates the unrestricted density estimator in (\ref{unr}).
\end{remark}
\subsection{Improving the Bayes predictive density estimation upon ancillary information}
Often, we have at our disposal additional information from which we may gain a greater understanding, if correctly employed.
In order to incorporate the ancillary information, let us assume the following scenario:

Let $X_i$ for $i=1,2$ be independent $\operatorname{Gam}(r_i, \lambda_i)$, 
$y_1$ has $\operatorname{Gam}(r'_1, \lambda_1)$, 
and $y_1$ and $X_i$, are independent where $r_i>0$, $r'_1$ are known and $\lambda_i>0$ is unknown but we have a prior information that the ratio of scale parameters satisfy $\frac{\lambda_1}{\lambda_2}\geq 1$ (i.e. $\lambda_1 \geq \lambda_2 >0$). The following theorem provides a posterior predictive density estimator subject to constraints placed on the ratio of scale parameters
by assuming that $\lambda_1$ and $\lambda_2$ are independent; that is, $\pi(\lambda_1, \lambda_2)=\pi(\lambda_1)\pi(\lambda_2)$. 
\begin{theorem}
The posterior predictive density of $Y_1 \sim \operatorname{Gam}(r'_1, \lambda_1)$, based on $X_1\sim \operatorname{Gam}(r_1, \lambda_1)$ and $X_2\sim \operatorname{Gam}(r_2, \lambda_2)$, where $r_i>0$, $r'_1>0$ and $\lambda_i>0$ are unknown associated with the prior information $\pi(\lambda_1, \lambda_2)=\pi(\lambda_1)\pi(\lambda_2)I_{(0, \infty)}(\lambda_1)\,I_{(0, \lambda_1)}(\lambda_2)$ satisfy $\frac{\lambda_1}{\lambda_2}\geq 1$, is given by
\begin{equation}\label{res}
    \hat{q}_1(y_1; x_1, x_2)=\frac{C(r_1+r'_1-1, x_1+y_1, r_2-1, x_2)}{C(r_1-1, x_1, r_2-1, x_2)}\frac{m(r_1-1, x_1)}{m(r_1+r'_1-1, x_1+y_1)}\hat{q}_0(y; x_1)\,,
\end{equation}
which is a weighted function of $\hat{q}_0(y; x_1)$, the unrestricted posterior predictive density of $y_1$  based on $x_1$ obtained in equation (\ref{unres}), and $m(\cdot, \cdot)$ and $C(\cdot, \cdot)$ are given in equations (\ref{m3}) and (\ref{c4}), respectively.
\end{theorem}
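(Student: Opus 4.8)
The plan is to apply the general Bayes predictive recipe (\ref{formula-mre}), now with a two--dimensional parameter $(\lambda_1, \lambda_2)$ and the order--restricted prior. Since $Y_1$ depends only on $\lambda_1$, I would write $\hat{q}_1(y_1; x_1, x_2)$ as the ratio of $\int_0^\infty \int_0^{\lambda_1} q_{r'_1,\lambda_1}(y_1)\, p_{\lambda_1}(x_1)\, p_{\lambda_2}(x_2)\, \pi(\lambda_1)\pi(\lambda_2)\, d\lambda_2\, d\lambda_1$ to the same double integral without the $q$ factor (the marginal $p(x_1, x_2)$). The restriction $I_{(0,\lambda_1)}(\lambda_2)$ is exactly what prevents the two parameters from separating, so the whole argument is organized around integrating out $\lambda_2$ first on the inner interval $(0, \lambda_1)$ and only then integrating over $\lambda_1$.

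The engine of the computation is the identity already used in Lemma \ref{prime}, namely $\operatorname{Gam}_{r,\lambda}(x) = \frac{1}{r-1}\operatorname{IG}_{r-1, x}(\lambda)$, which rewrites every gamma likelihood factor as an inverse--gamma density in the scale parameter. Applying it to $p_{\lambda_2}(x_2)$ turns the inner integral $\int_0^{\lambda_1} \pi(\lambda_2)\operatorname{Gam}_{r_2,\lambda_2}(x_2)\, d\lambda_2$ into $\frac{1}{r_2-1}\, m(r_2-1, x_2; \lambda_1)$, precisely the truncated marginal of (\ref{m3}) with truncation point $\bar{\xi}=\lambda_1$. This is the step where the constraint gets absorbed into a single factor depending on $\lambda_1$, and it is the place that requires the most care.

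For the denominator I would then use $p_{\lambda_1}(x_1) = \frac{1}{r_1-1}\operatorname{IG}_{r_1-1, x_1}(\lambda_1)$, so that the outer integral $\int_0^\infty \pi(\lambda_1)\, m(r_2-1, x_2; \lambda_1)\, \operatorname{IG}_{r_1-1, x_1}(\lambda_1)\, d\lambda_1$ is exactly $C(r_1-1, x_1, r_2-1, x_2)$ from (\ref{c4}), up to the factor $\frac{1}{(r_1-1)(r_2-1)}$. For the numerator I would first collapse the product $q_{r'_1,\lambda_1}(y_1)\, p_{\lambda_1}(x_1)$: multiplying the two gamma densities and matching the powers of $\lambda_1$ against the exponential $e^{-(x_1+y_1)/\lambda_1}$ shows it equals a $(y_1, x_1)$--dependent constant times $\operatorname{IG}_{r_1+r'_1-1,\, x_1+y_1}(\lambda_1)$. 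The outer integral then reproduces $C(r_1+r'_1-1, x_1+y_1, r_2-1, x_2)$.

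Finally I would form the ratio. The factors $\frac{1}{r_2-1}$ cancel, and the surviving gamma constants $(r_1-1)\Gamma(r_1+r'_1-1)/[\Gamma(r_1)\Gamma(r'_1)]$ simplify via $\Gamma(r_1) = (r_1-1)\Gamma(r_1-1)$ to $1/\operatorname{B}(r_1-1, r'_1)$, leaving $\frac{1}{\operatorname{B}(r_1-1,r'_1)}\frac{x_1^{r_1-1}y_1^{r'_1-1}}{(x_1+y_1)^{r_1+r'_1-1}}$ times the ratio $C(r_1+r'_1-1, x_1+y_1, r_2-1, x_2)/C(r_1-1, x_1, r_2-1, x_2)$. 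Recognizing the leading block as $\hat{q}_0(y_1; x_1)\cdot m(r_1-1,x_1)/m(r_1+r'_1-1, x_1+y_1)$ from (\ref{unres}) rewrites the answer in the weighted form (\ref{res}). The only genuine obstacle is bookkeeping the truncation: making sure the upper limit $\lambda_1$ is carried correctly through the interchange of integration order, so that a truncated marginal $m(\cdot\,;\lambda_1)$ sits inside the outer $\operatorname{IG}$ integral --- which is exactly the structure the $C(\cdot)$ function in (\ref{c4}) was defined to encode.
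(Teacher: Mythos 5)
Your proposal is correct and follows essentially the same route as the paper: both arguments integrate out $\lambda_2$ first over $(0,\lambda_1)$ using the identity $\operatorname{Gam}_{r,\lambda}(x)=\frac{1}{r-1}\operatorname{IG}_{r-1,x}(\lambda)$ to produce the truncated marginal $m(r_2-1,x_2;\lambda_1)$, then recognize the outer $\lambda_1$--integrals as the $C(\cdot)$ function of (\ref{c4}) in both numerator and denominator. The only cosmetic difference is that the paper passes explicitly through the joint and marginal posteriors of $(\lambda_1,\lambda_2)$ before applying (\ref{formula-mre}), whereas you write the predictive density directly as a ratio of double integrals; these are the same computation.
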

\begin{proof}
The joint posterior density is given by
\begin{equation}\label{joint}
\pi(\lambda_1, \lambda_2 \mid x_1, x_2)=\frac{\pi( \lambda_1)\pi(\lambda_2)\,p_{r_1, \lambda_1}(x_1) \,p_{r_2, \lambda_2}(x_2)}{\int_0^{\infty}\int_0^{\lambda_1}\pi(\lambda_1)\pi(\lambda_2)\,p_{r_1, \lambda_1}(x_1)p_{r_2,\lambda_2}(x_2)\,d\lambda_2 d\lambda_1}\,, 
\end{equation}
The denominator in equation (\ref{joint}) can be written as
$$\int_0^{\infty} \pi(\lambda_1)\,p_{r_1, \lambda_1}(x_1) \int_0^{\lambda_1} \pi(\lambda_2) p_{r_2, \lambda_2}(x_2)\, d\lambda_2 \,d\lambda_1\,.$$
Applying notations in (\ref{m3}) and (\ref{c4}), the inner integral in the above quotation is equal to 
$$\frac{1}{r_2-1} \int_0^{\lambda_1} \pi(\lambda_2) \operatorname{IG}_{r_2-1, x_2}(\lambda_2)\,d\lambda_2=\frac{m(r_2-1, x_2; \lambda_1)}{r_2-1}\,,$$
therefore the denominator is obtained by
\begin{align*}
    \frac{1}{(r_1-1)(r_2-1)} &\int_0^{\infty} \pi(\lambda_1)\,m(r_2-1, x_2; \lambda_1)\operatorname{IG}_{r_1-1, x_1}(\lambda_1)\,d\lambda_1\\
    &=\frac{C(r_1-1, x_1, r_2-1, x_2)}{(r_1-1)(r_2-1)}\,.
\end{align*}
Therefore the posterior density (\ref{joint}) has the following form:
\begin{equation}
    \frac{\pi(\lambda_1)\pi(\lambda_2)\frac{\lambda_1^{-r_1}}{\Gamma(r_1-1)}\frac{\lambda_2^{-r_2}}{\Gamma(r_2-1)}x_1^{r_1-1}x_2^{r_2-1}e^{-x_1/\lambda_1}e^{-x_2/\lambda_2}}{C(r_1-1, x_1, r_2-1, x_2)}\,,
\end{equation}
so the marginal posterior is given by
\begin{align*}
    \pi(\lambda_1 \mid x_1, x_2)&= \int_0^{\lambda_1} \pi(\lambda_1, \lambda_2 \mid x_1, x_2)\,d\lambda_2\\
    &=\frac{\pi(\lambda_1)\frac{\lambda_1^{-r_1}}{\Gamma(r_1-1)}x_1^{r_1-1}e^{-x_1/\lambda_1}}{C(r_1+-1, x_1, r_2-1, x_2)}\int_0^{\lambda_1}\pi(\lambda_2)\frac{\lambda_2^{-r_2}}{\Gamma(r_2-1)}x_2^{r_2-1}e^{-x_2/\lambda_2}\\
    &=\pi(\lambda_1)\frac{m(r_2-1, x_2; \lambda_1)}{C(r_1-1, x_1, r_2-1, x_2)} \, \frac{\lambda_1^{-r_1}}{\Gamma(r_1-1)}x_1^{r_1-1}e^{-x_1/\lambda_1}\,,\,\,\,\,\lambda_1>0\,.
\end{align*}
Substituting the above equation in the formula of posterior predictive density estimator, gives
\begin{align*}
\hat{q}_1(y_1; x_1, x_2)&=\int_0^{\infty} q_{r'_1, \lambda_1}(y_1)\,\pi(\lambda_1 \mid x_1, x_2)\,d\lambda_1\\
&=\frac{1}{\Gamma(r'_1)\,\Gamma(r_1-1)}\frac{x_1^{r_1-1} y_1^{r'_1}}{C(r_1-1, x_1, r_2-1, x_2)}\int_0^{\infty}\pi(\lambda_1)m(r_1-1, x_1)e^{-(x_1+y_1)/\lambda_1}\lambda_1^{-(r_1+r'_1-1)-1}\,d\lambda_1\\
&=\frac{1}{\operatorname{B}(r_1-1, r'_1)}\frac{C(r_1+r'_1-1, x_1+y_1, r_2-1, x_2)}{C(r_1-1, x_1, r_2-1, x_2)}\frac{x_1^{r_1-1} y_1^{r'_1}}{(x_1+y_1)^{r_1+r'_1-1}}\,,
\end{align*}
and replacing $\hat{q}_1(y_1; x_1)$ from (\ref{unres}) in the above equation thus completes the proof.
\end{proof}
\begin{corollary}
In the case of non informative prior $\pi(\lambda_1, \lambda_2)=\frac{1}{\lambda_1 \lambda_2}$, for $\lambda_1 \geq \lambda_2 \geq 0$, the Bayesian predictive density estimator given in equation (\ref{res}) is in fact the weighted beta prime distribution and is given by
\begin{align}\label{q1}
\hat{q}_1(y_1; x_1, x_2)=\frac{r_1 x_2^{-r} y_1^r ,\Gamma (r'+r_1+r_2) \, _2F_1\left(r'+r_1,r'+r_1+r_2;r'+r_1+1;-\frac{x_1+y_1}{x_2}\right)}{(r'+r_1) \Gamma (r') \Gamma (r_1+r_2) \, _2F_1\left(r_1,r_1+r_2;r_1+1;-\frac{x_1}{x_2}\right)}  \,.
\end{align}
\end{corollary}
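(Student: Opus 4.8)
The plan is to specialize the general expression (\ref{res}) to the flat priors $\pi(\lambda_1)=1/\lambda_1$ and $\pi(\lambda_2)=1/\lambda_2$, for which every factor on the right-hand side has a closed form recorded in the Remark. Recall that (\ref{res}) writes $\hat q_1$ as the product of the ratio $C(r_1+r'_1-1,x_1+y_1,r_2-1,x_2)/C(r_1-1,x_1,r_2-1,x_2)$, the ratio $m(r_1-1,x_1)/m(r_1+r'_1-1,x_1+y_1)$, and the unrestricted estimator $\hat q_0$. Under the flat prior the first is evaluated through the explicit hypergeometric formula in part~3 of the Remark, the second through $m(s_1,s_2)=s_1/s_2$ from (\ref{m2flat}), and the third through the three-parameter beta prime form (\ref{unr}). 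Thus the whole argument reduces to one substitution followed by algebraic simplification.

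First I would read off the two $C$-terms. Matching the argument slots to $(k_1,k_2,s_1,s_2)$, the numerator has $k_1=r_1+r'_1-1$, $k_2=x_1+y_1$, while the denominator has $k_1=r_1-1$, $k_2=x_1$, both with $s_1=r_2-1$, $s_2=x_2$. Inserting these into part~3 produces, in the numerator, the factors $(r_1+r'_1-1)(x_1+y_1)^{r_1+r'_1-1}x_2^{-(r_1+r'_1+1)}\Gamma(r_1+r'_1+r_2)/\Gamma(r_2-1)$ together with ${}_2\tilde F_1(r_1+r'_1,r_1+r'_1+r_2;r_1+r'_1+1;-(x_1+y_1)/x_2)$, and the analogous expression with $r'_1$ deleted in the denominator. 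Forming the quotient, $\Gamma(r_2-1)$ cancels, the powers of $x_2$ collapse to $x_2^{-r'_1}$, and the ordinary $\Gamma$'s combine to $\Gamma(r_1+r'_1+r_2)/\Gamma(r_1+r_2)$.

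Next I would convert the regularized functions via ${}_2\tilde F_1(a,b;c;z)={}_2F_1(a,b;c;z)/\Gamma(c)$; since the lower parameters are $c=r_1+r'_1+1$ and $c=r_1+1$, this contributes the factor $\Gamma(r_1+1)/\Gamma(r_1+r'_1+1)$ and leaves the desired ${}_2F_1$ quotient. Multiplying by $m(r_1-1,x_1)/m(r_1+r'_1-1,x_1+y_1)=(r_1-1)(x_1+y_1)/\big((r_1+r'_1-1)x_1\big)$ and by $\hat q_0$ from (\ref{unr}), the powers $x_1^{r_1-1}$ and $(x_1+y_1)^{r_1+r'_1-1}$ cancel against those supplied by the $C$-ratio, the factors $(r_1+r'_1-1)$ and $(r_1-1)$ cancel, and the surviving constant is reorganized using $\operatorname{B}(r_1,r'_1)=\Gamma(r_1)\Gamma(r'_1)/\Gamma(r_1+r'_1)$ and the recurrences $\Gamma(r_1+1)=r_1\Gamma(r_1)$, $\Gamma(r_1+r'_1+1)=(r_1+r'_1)\Gamma(r_1+r'_1)$ into $r_1/\big((r_1+r'_1)\Gamma(r'_1)\big)$. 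Collecting the remaining $x_2^{-r'_1}y_1^{r'_1-1}$ and the $\Gamma$-ratio then yields (\ref{q1}).

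The computation is routine, so the only real issue is bookkeeping. The step most likely to cause trouble is the regularized-to-ordinary hypergeometric conversion, because the $\Gamma(c)$ factors it introduces are exactly what generate the $r_1/(r_1+r'_1)$ prefactor; dropping or misplacing them corrupts the leading constant. A secondary point to watch is the slot matching in part~3 of the Remark, since the roles of $(k_1,k_2)$ and $(s_1,s_2)$ must not be interchanged when the two $C$-terms are extracted from (\ref{res}).
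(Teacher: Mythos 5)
Your proposal is correct and is exactly the route the paper intends: the corollary is stated without proof as a direct substitution of the flat-prior formulas (\ref{m2flat}), (\ref{m3flat}) and the hypergeometric expression for $C$ into (\ref{res}), followed by the cancellations and the ${}_2\tilde F_1 \to {}_2F_1$ conversion you describe, and your bookkeeping of the constants reproduces the stated prefactor $r_1\Gamma(r'+r_1+r_2)/\bigl((r'+r_1)\Gamma(r')\Gamma(r_1+r_2)\bigr)$. Note only that your derivation yields the power $y_1^{r'-1}$ (consistent with $\hat q_0$ being a density), whereas the displayed equation (\ref{q1}) shows $y_1^{r}$; this is a typographical slip in the paper, not an error in your argument.
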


\section{Real examples and illustrations}\label{real}
In this section, we apply the proposed methods in Section 3 in order to find the Bayes predictive density estimators based on the National Hockey League’s dataset for the $2017-2018$ season to predict the upcoming $2018-2019$ season.
We are concerned with the waiting time prediction until scoring $r$-th goal (in a real scenario $r$ rarely exceeds $6$) by team A as compared to team B. Similar explanations have been done by Suzuki et al. (2010) in predicting match outcomes football World Cup 2006. We shall assume that $N_1$ and $N_2$, the number of goals scored by team A and B, are two independent, random Poisson variables such that
\begin{align*}
    N_1 \mid \theta_1 \sim \operatorname{Po}\left(\theta_1 \frac{R_1}{R_2}\right)\,,\,\,\,
    N_2 \mid \theta_2 \sim \operatorname{Po}\left(\theta_2 \frac{R_2}{R_1}\right)\,,
\end{align*}
where $\theta_1$ and $\theta_2$ can be interpreted as the mean number of goals teams A and B score against their opponents, and $R_1$ and $R_2$ are the teams A and B points in the previous $2017-2018$ season. This assumption seems quite reasonable, since it is expected a that team with a higher point scores more goals rather than lower point team.

Equivalently, we can assume the waiting time to see team A scores $r_1$ goals to team B and team B scores $r_2$ goals to team A independently follow
\begin{align}
    X_1 &\mid \theta_1, r_1 \sim \operatorname{Gam}\left(r_1, \theta_1 \frac{R_1}{R_2}\right)\,,\\
    X_2 &\mid \theta_2, r_2 \sim \operatorname{Gam}\left(r_2, \theta_2 \frac{R_2}{R_1}\right)\,.
\end{align}
The goal is to estimate the future density for the expected time to seeing $r_1$ goal by team A to team B, $Y_1 \mid \theta_1, r'_1 \sim \operatorname{Gam}\left(r'_1, \theta_1\frac{R'_1}{R'_2}\right)$, using ancillary information $\theta_1\geq \theta_2$ (i.e., by contemplating that team A is more capable of scoring against team B based on previous records and$/$or the predictions of sports analysts.

As an example, we consider two Canadian hockey team, \textit{the Toronto Maple Leafs (A)} and \textit{Montreal Canadiens (B)}. Based on the 2017--18 season teams' points, available at \href{https://www.nhl.com}{\tt www.nhl.com}, $R_1=105$ and $R_2=71$, respectively.
Table 1 (on page 10) shows the time elapsed (in minutes) until scoring the third goal in games ($r_1=r_2=3$) played by the Toronto Maple Leafs versus the Montreal Canadiens in the 2017--2018 National Hockey League season.

In Figure \ref{q0q1}, the Bayesian predictive density estimators $\hat{q}_0$ and $\hat{q}_1$ (based on equations \ref{unr} and \ref{q1} with the exact densities given in Table 2) for the elapsed time until the Toronto Maple Leafs scores the $3^{rd}$ goal to the Montreal Canadiens in an upcoming game are depicted. Note that $\hat{q}_1$ is obtained by employing ancillary information that the performance of the Toronto Maple was better the Montreal Canadiens (specialists' opinions or their points) i.e. $\theta_1/\theta_2 \geq 1$ as well as contemplating their points.
It is important to mention here that the densities in here have been truncated to $(0, 60)$ minutes since NHL hockey games consist of three periods, each lasting $20$ minutes.
\begin{figure}[H]
    \centering
    \includegraphics[width=0.5\textwidth]{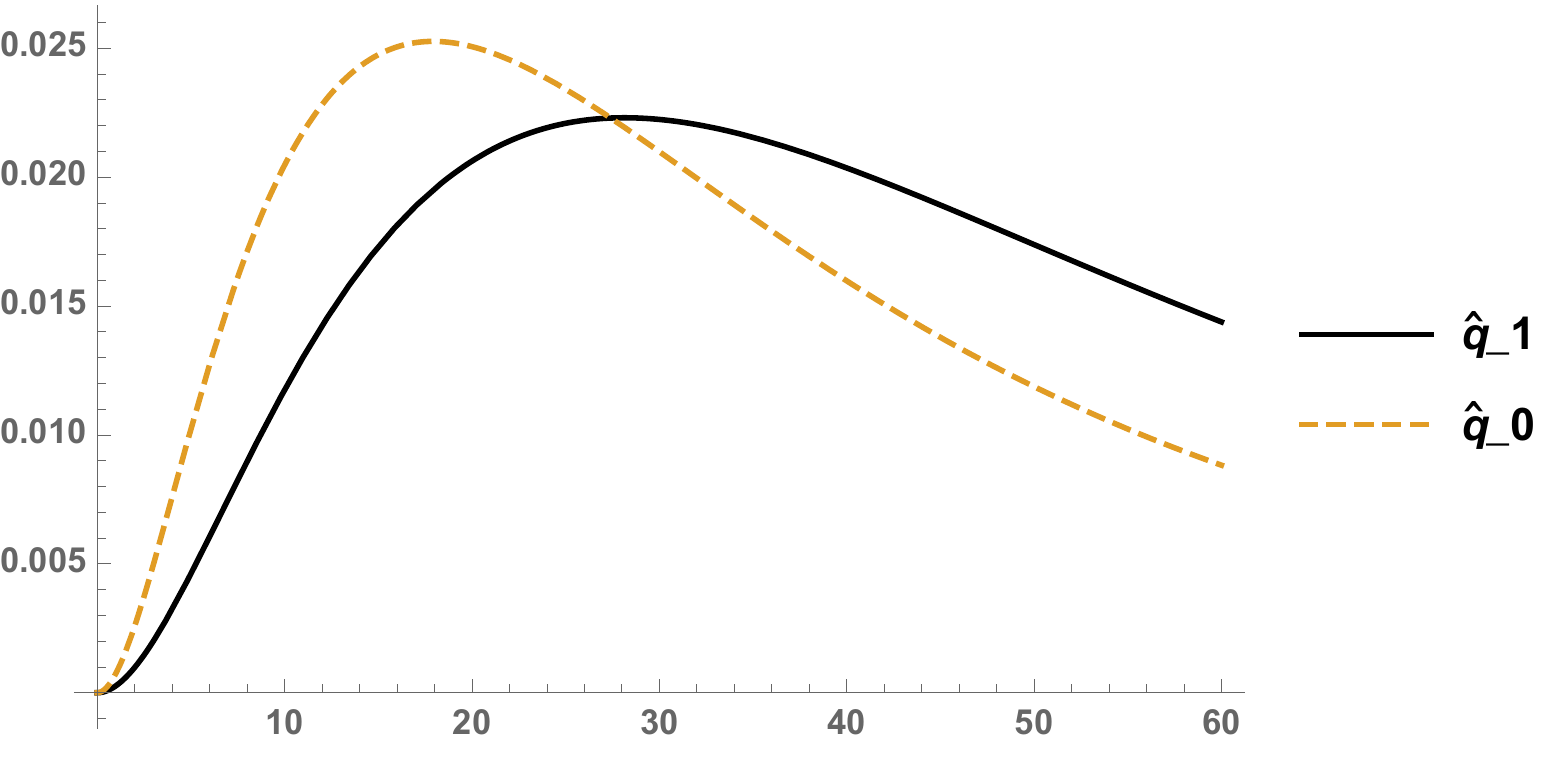}
    \caption{Density of waiting time until scoring the $3^{rd}$ goal by the Toronto Maple Leafs versus the Montreal Canadiens. $\hat{q}_0$ is the Bayes pdf without ancillary information and $\hat{q}_1$ includes ancillary information, such as specialists' opinions and last year's points, based on $r_1=r_2=r'=3$ and the data presented in Table 1. }
    \label{q0q1}
\end{figure}
Table 2 contains the density of obtained estimator $\hat{q}_{0}$ and the Bayes estimators $q_{1}$, along with their modes, means, and $10^{th}$, $50^{th}$ and $90^{th}$ percentiles for the future density $Y_1$ of waiting time until scoring $3^{rd}$ score by the Toronto Maple Leafs based on the data from Table 1.

For instance, if we contemplate the ancillary information as we described above, we are anticipating to see the third goal of the Toronto Maple Leafs around minute $28$, rather than minute $18$ of the game, or we are expecting on average the Toronto Maple Leafs score the third goal versus their opponents around minute $28$ of each game. However, we apply the information at our disposal to improve our estimation, we hope to see the third goal of the Toronto Maple Leafs versus the Caniadiens around minute $33$ of the game, or in less than $20$ percent of matches Toronto scores the third goal at $14.38$. By applying the available information, this time moves to end of first period of the game. 
\begin{table}[H]
\begin{tabular}{|c|c|c|}
\hline
\rowcolor[HTML]{EFEFEF} 
Estimator & Truncated Density on $(0, 60)$& Mode,\, Mean,\, $P_{0.20}$,\,\, $P_{0.50}$,\,\, $P_{0.90}$  \\ \hline
$\hat{q}_0$ & $1901470 y_1^2/(35.85 + y_1)^6 $& 17.92\, 28.35, 14.38, 26.62, 50.3 \\ \hline
$\hat{q}_1$ & $y_1^2 (0.055 + (0.0004 + 10^{-6} y_1) y_1)/(1.92 + 
  0.025 y_1)^8 $& 28.13,\, 33.12 19.06, 32.82, 53.48 \\  \hline

\end{tabular}
\caption*{Table 2: Bayesian predictive density estimators $\hat{q}_1$, $\hat{q}_0$, with and without ancillary information respectively, along with the truncated probability density function on $(0, 60)$ minutes as well as the mode, mean, and $P_i$, $i \times 100$ percentiles.}
\label{my-label}
\end{table}
\subsection{Prediction errors}
We calculate the the prediction error (pe) of $\hat{q}_0$ and $\hat{q}_1$ in order to investigate their performance under KL loss given in equation (\ref{KL}). We verify their pe's based on the $2016-17$ season and see their performance by comparing the KL loss (distance) Bayesian density estimator and exact density for waiting time until the Toronto Maple Leafs scores the $3^{rd}$ goal. This can be calculated by
\begin{align*}
pe(\hat{q}_i)=\mathbb{E}^{Y_1} \log \frac{q_{\lambda}(y_1)}{\hat{q}_i(y_1)}\,,
\end{align*} 
where $Y_1$, waiting time until the Toronto Maple Leafs scores the $3^{rd}$ goals based on the $2016-17$ dataset, follows a truncated gamma $\operatorname{Gam}(3, 18.3)$ on $(0, 60)$ (with an expected value of $35.8$).
It can be verified that the prediction error of our estimator when the ancillary information is taken into account $pe(\hat{q}_1)=0.04$ while $pe(\hat{q}_0)=0.45$ and $\hat{q}_1(y_1)$ outperforms other estimator, in estimating the density of waiting time until the $3^{rd}$ goal occurs in a match Toronto Maple Leafs versus the Montreal Canadiens in a National Hockey League's game.

However, the frequentist risks of  $\hat{q}_1(y_1)$ and  $\hat{q}_0(y_1)$, from equation (\ref{frequentistrisk}), can be used to verify the dominance result of our proposed estimator $\hat{q}_1(y_1)$ in order to make a prediction the result of upcoming match the Toronto Maple Leafs versus the Montreal Canadiens in the $2018-19$ season based on data from the $2017-18$ season.  
Figure \ref{riskq0q1} depicts the performance of $\hat{q}_1(y_1)$ and  $\hat{q}_0(y_1)$ via the frequentist risk functions. Since there is no ancillary information (i.e. $\theta_1=\theta_2$ and $R_1/R_2=1$), both risks are equal and as long as the ratio of $\theta_1/\theta_2$ increases (gain from ancillary information), the risk of $\hat{q}_1(y_1)$ decreases (the risk of $\hat{q}_0[y_1]$ is constant as we anticipated, since it is an MRE estimator) and both risk functions converge to each other due to lack of information.
\begin{figure}[htp]
    \centering
    \includegraphics[width=0.7\textwidth]{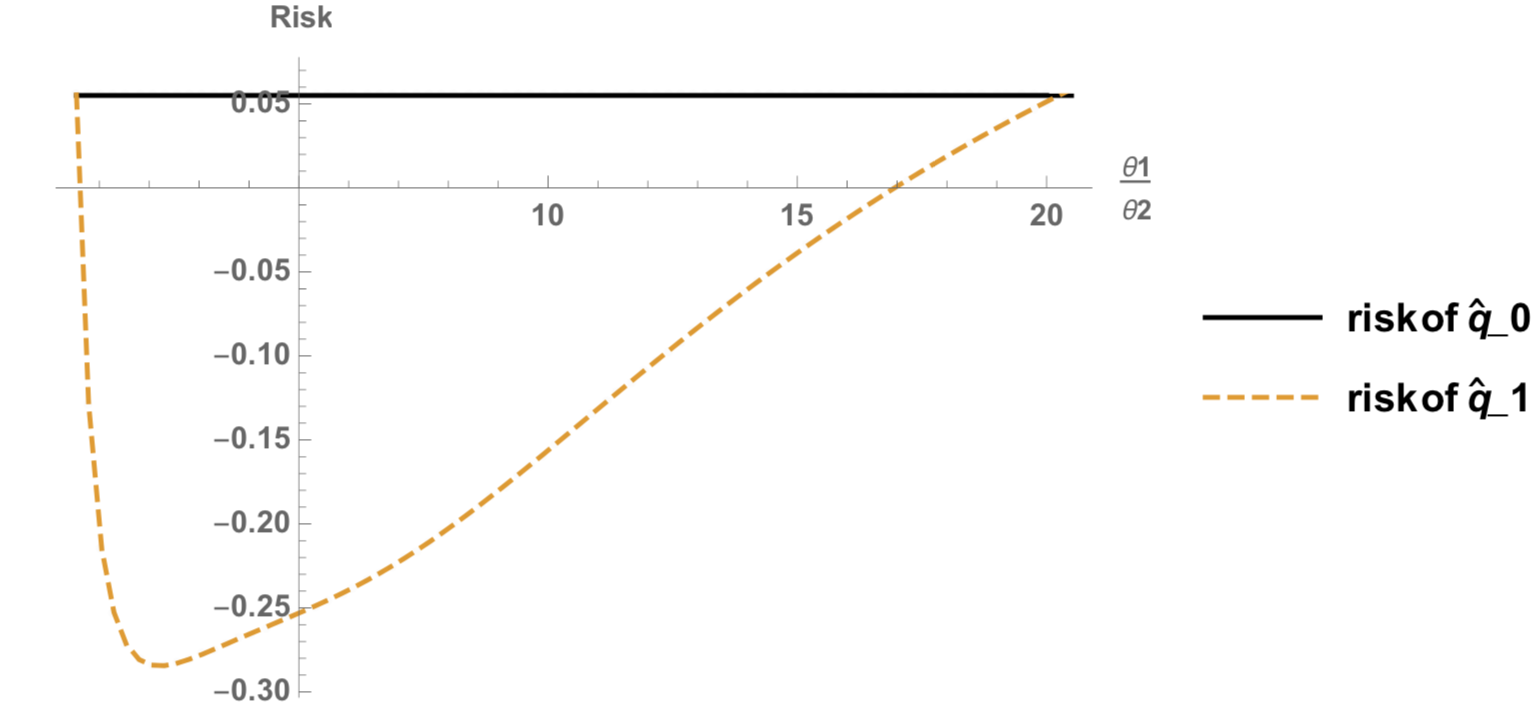}
    \caption{Frequentist risk functions of $\hat{q}_0$ and $\hat{q}_1$ based on $r_1=r_2=r'=3$ and the data presented in Table 1. }
    \label{riskq0q1}
\end{figure}
\begin{table}[h]
    \begin{minipage}{.55\linewidth}
    \centering
        \begin{tabular}{lc}
\cellcolor[HTML]{EFEFEF} Toronto Maple Leafs v.s. Opponents &\cellcolor[HTML]{EFEFEF}  Time Elapsed \\
\hline
Winnipeg Jets & 18.38 \\
New York Rangers & 11.12 \\
Chicago Blackhawks & 55.68 \\
New Jersey Devils & 53.57 \\
Montreal Canadiens & 32.67 \\
Detroit Red Wings & 15.75 \\
Ottawa Senators & 52.85 \\
Los Angeles Kings & 42.88 \\
Carolina Hurricanes & 27.17 \\
Anaheim Ducks & 58.48 \\
St. Louis Blues & 50.13 \\
Vegas Golden Knights & 15.03 \\
Minnesota Wild & 43.65 \\
Boston Bruins & 46.83 \\
Montreal Canadiens & 40.4 \\
Carolina Hurricanes & 31.6 \\
Calgary Flames & 41.88 \\
Edmonton Oilers & 13.08 \\
Pittsburgh Penguins & 12.9 \\
Carolina Hurricanes & 10.53 \\
Buffalo Sabres & 27.63 \\
New York Rangers & 31.35 \\
Arizona Coyotes & 11.4 \\
Montreal Canadiens & 57.17 \\
Buffalo Sabres & 57.93 \\
Pittsburgh Penguins & 31.57 \\
Colorado Avalanche & 58.07 \\
Vegas Golden Knights & 40.43 \\
Dallas Stars & 45.18 \\
Buffalo Sabres & 26.35 \\
Montreal Canadiens & 39.5 \\
Ottawa Senators & 52.43 \\
Tampa Bay Lightning & 35.67 \\
Ottawa Senators & 44.33 \\
Dallas Stars & 29.45 \\
New York Islanders & 30.52 \\
New York Rangers & 20.83 \\
Anaheim Ducks & 35.43 \\
Ottawa Senators & 11.48 \\
Tampa Bay Lightning & 31.57 \\
Columbus Blue Jackets & 28.05 \\
Pittsburgh Penguins & 35.73 \\
Detroit Red Wings & 59.48 \\
New York Islanders & 56.48 \\
Boston Bruins & 39.05 \\
Tampa Bay Lightning & 45.42 \\
Detroit Red Wings & 47.43 \\
Florida Panthers & 13.9 \\
New York Islanders & 37.28 \\
Nashville Predators & 36.72
\end{tabular}
    \end{minipage} 
        \begin{minipage}{.35\linewidth}
        \centering
        \begin{tabular}{lc}
\cellcolor[HTML]{EFEFEF} Canadiens v.s. Opponents & \cellcolor[HTML]{EFEFEF}Time elapsed \\
\hline
Toronto Maple Leafs & 31.52 \\
Florida Panthers & 38.3 \\
New York Rangers & 13.23 \\
Ottawa Senators & 13.27 \\
Minnesota Wild & 54.78 \\
Winnipeg Jets & 48.27 \\
Vegas Golden Knights & 23.52 \\
Arizona Coyotes & 35.3 \\
Buffalo Sabres & 48.43 \\
Columbus Blue Jackets & 58.57 \\
Detroit Red Wings & 25.47 \\
Detroit Red Wings & 21.83 \\
St. Louis Blues & 46.55 \\
Vancouver Canucks & 37.05 \\
Winnipeg Jets & 50.23 \\
Calgary Flames & 43.15 \\
Detroit Red Wings & 34.62 \\
New York Islanders & 30.48 \\
New Jersey Devils & 54.67 \\
Dallas Stars & 29.25 \\
Pittsburgh Penguins & 32.93 \\
Vancouver Canucks & 48.73 \\
Boston Bruins & 28.82 \\
Pittsburgh Penguins & 34.38 \\
New York Islanders & 39.25 \\
Washington Capitals & 58.68 \\
Colorado Avalanche & 52.57 \\
Carolina Hurricanes & 28.98 \\
Anaheim Ducks & 10.2 \\
Ottawa Senators & 38.75 \\
Philadelphia Flyers & 57.1 \\
Vegas Golden Knights & 48.72 \\
New York Rangers & 58.68 \\
Tampa Bay Lightning & 36.47 \\
New York Islanders & 33.58 \\
Buffalo Sabres & 59.25 \\
Washington Capitals & 49.47 \\
Detroit Red Wings & 29.47\\
\hline
\end{tabular}
\caption*{Table 1: Time elapsed (in minutes) until scoring the third goal in games corresponding to Toronto Maple Leafs (left) and Montreal Canadiens (right). }
    \end{minipage}%
\end{table}

\section{Concluding remarks}\label{conclude}
In summation, our results demonstrate that applying additional ancillary information can be used to come up with a more accurate prediction via a better Bayesian predictive density estimator for the future density of a gamma. Our proposed density estimator for the waiting time until scoring the $r$-th goal outperforms the usual MRE estimator. This was verified by comparing the prediction errors of two density estimators as well as their frequentist risk functions.

\section*{Acknowledgement}
The authors thank the Stathletes company specially Meghan Chayka and Jeff Goeree for providing the data and helpful comments to this manuscript. The Natural Sciences and the Engineering Research Council of Canada, and Ontario Centre of Excellence supported the research of Professor S. Ejaz Ahmed.

\bibliographystyle{plain}
\bibliography{dissertation}
\medskip

\end{document}